\documentclass[letterpaper, 10pt, conference]{ieeeconf}      

\usepackage{xcolor}
\usepackage{cite}
\usepackage{array}
\usepackage{amsmath,amssymb,amsfonts}
\usepackage{mathtools}
\usepackage{cuted}
\usepackage{graphicx}
\usepackage{empheq}

\IEEEoverridecommandlockouts       
\usepackage[ruled,vlined]{algorithm2e}

\overrideIEEEmargins                                      

\title{\LARGE \bf
Distributed two-time-scale methods over clustered networks
}

\author{ Thiem V. Pham$^{1}$ \;\; Thinh T. Doan$^{2}$\;\;   Dinh Hoa Nguyen$^{3}$
\thanks{$^{1}$CReSTIC, University of Reims Champagne-Ardenne, France. Email: {\tt\small van-thiem.pham@etudiant.univ-reims.fr}}%
\thanks{$^{2}$Department of Electrical and Computer Engineering, Virginia Tech, USA. Email: {\tt\small thinhdoan@vt.edu}}%
\thanks{$^{3}$International Institute for Carbon-Neutral Energy Research (WPI-I$^2$CNER), and Institute of Mathematics for Industry (IMI), Kyushu University, 744 Motooka, Nishi-ku, Fukuoka 819-0395, Japan. Email:         {\tt\small hoa.nd@i2cner.kyushu-u.ac.jp}}%
}


\newcommand{\Rset}{\mathbb{R}}

\newcommand{\Ccal}{{\cal C}}

\newcommand{\Ecal}{{\cal E}}

\newcommand{\Gcal}{{\cal G}}

\newcommand{\Ncal}{{\cal N}}

\newcommand{\Vcal}{{\cal V}}

\newcommand{\Ibf}{{\bf I}}

\newcommand{\Vbf}{{\bf V}}
\newcommand{\Wbf}{{\bf W}}
\newcommand{\Xbf}{{\bf X}}
\newcommand{\Ybf}{{\bf Y}}


\newcommand{\1}{{\mathbf{1}}}


\newcommand{\xbar}{{\bar{x}}}



\newtheorem{lem}{Lemma}
\newtheorem{thm}{Theorem}

\newtheorem{assump}{Assumption}
\newtheorem{remark}{Remark}

\begin{document}

\maketitle
\thispagestyle{empty}
\pagestyle{empty}

\begin{abstract}
In this paper, we consider consensus problems over a network of nodes, where the network is divided into a number of clusters. We are interested in the case where the communication topology within each cluster is dense as compared to the sparse communication across the clusters. Moreover, each cluster has one leader which can communicate with other leaders in different clusters. The goal of the nodes is to agree at some common value under the presence of communication delays across the clusters.      

Our main contribution is to propose a novel distributed two-time-scale consensus algorithm, which pertains to the separation in network topology of clustered networks. In particular, one scale is to model the dynamic of the agents in each cluster, which is much faster (due to the dense communication) than the scale describing the slowly aggregated evolution between the clusters (due to the sparse communication). We prove the convergence of the proposed method in the presence of uniform, but possibly arbitrarily large, communication delays between the leaders. In addition, we provided an explicit formula for the convergence rate of such algorithm, which characterizes the impact of delays and the network topology. Our results shows that after a transient time characterized by the topology of each cluster, the convergence of the two-time-scale consensus method only depends on the connectivity of the leaders. Finally, we validate our theoretical results by a number of numerical simulations on different clustered networks.

\end{abstract}

\section{Introduction}
Clustered network of agents is a specific type of multi-agent systems, where the whole network is divided into distinct clusters, and usually the connection structure in each cluster is denser, while the inter-cluster connection is sparser. Each cluster might contains smaller clusters inside, resulting in a hierarchical and multi-layer clustered network. This type of system can be found in a variety of application domains including energy systems \cite{Romeres13}, robotics \cite{Magnus2020}, biological and chemical engineering \cite{Bleibel18,Leiser17}, social networks \cite{Proskurnikov17}, brain science \cite{ADas19}, epidemic \cite{Leitch19}, etc., and has been a timely research topic in network science \cite{Barabasi-book06,barabasi2016network}. 

As an example, power and energy systems are large-scale systems composing of many subsystems inside, each of them can be regarded as a cluster \cite{Romeres13}.   
In another example of social networks, the opinion of each individual continuously evolves with respect to the views of the members belonging to its community in order to achieve a common agreement. In some specific conditions, at specific instants, one individual in each community (called a leader) can change its opinion by exchanging with other leaders outside its community. They will reset their opinion taking into account the ones of other leaders. These inter-cluster interactions can be considered as resets of the opinions \cite{Mor2016}. 

Hitherto, the existing literature on clustered networks of agents aims at either exploring how network structures affect to the controllability (e.g.,\cite{Rahimian13}), observability \cite{Mousavi19}, and control performances of the network \cite{Siami16}, or exploiting special properties of such networks for enhancing the overall network robustness, resiliency, etc. \cite{Pasqualetti13}.  It is worth emphasizing that in all researches on agent networks, network convergence is one of the most fundamental problems, whether infinitely or in finite-time. Network convergence could be significantly altered by latency, both intra-cluster and inter-cluster. To study clustered networks, singular perturbation theory is one popular approach, see e.g., \cite{Romeres13,Boker16,Awad19,JChow85,Biyik08,XCheng18,SMartin16}. Under this approach, aggregated and reduced order models were derived, where the fast dynamics inside clusters is ignored, or lumped into that of the slow dynamics occurring across clusters. As such, {\it the obtained results mostly depend on what happen between clusters.} 

Note also that the existing literature on distributed algorithms (stabilization, consensus, formation, etc.) for clustered networks has focused only on the network asymptotic convergence of these algorithms, while their {\it convergence rates are missing}, see e.g., \cite{Mor2016,Rejeb2015,Pham2020c,Pham2019f}, in addition to the aforementioned researches. The study in \cite{Mor2016} showed the existence of a positive decay rate to guarantee the overall network asymptotic consensus. The event-triggered resets defined for each cluster leading to asynchronous reset sequences were proposed in \cite{Rejeb2015}. In our recent work \cite{Pham2020c}, a robust formation controller design was proposed for clustered networks of unmanned aerial vehicles, but again the convergence was only asymptotic.  

In this research, motivated by the fact that the convergence speed inside clusters is usually much faster than that between clusters, due to denser connection structures and shorter communication distances, our goal is to derive explicitly the network convergence rate in accordance to inter-cluster delays and network structures. Several recent works have investigated the inter-cluster time delays, e.g. \cite{Magnus2020,Boker16,Liu2017c}, to achieve asymptotic network convergence. The work \cite{Magnus2020} utilized passivity theory for the cooperative control of two clusters of robots over a very long distance which naturally incurs a very large time delay between such two clusters. The study \cite{Boker16} designed state-feedback controllers for clustered networks using singular perturbation theory. Nevertheless, in all of above researches, {\it no convergence rate was considered.} 

On the other hand, it can be observed that multi-time-scale algorithms are suitable approaches for studying clustered networks, due to their nature of different time scales inside clusters and across clusters. Therefore, the current research proposes a distributed two-time-scale consensus algorithm for clustered networks with inter-cluster time delays, where a faster consensus protocol is employed intra-cluster and a slower consensus law is accounted for inter-cluster time delays. Note that several two-time-scale methods have been proposed in the literature, e.g. \cite{Doan2017,DoanBS2018_ACC, Doan2018a, Doan,Doan2020, DoanM2019_Allerton, SMukherjee19,Dalal2017a,pmlr-v99-karimi19a,GuptaSY2019_twoscale,MokkademP2006,Konda2004}, for different problems in machine learning and reinforcement learning, however they are different from the one proposed in this paper.

\textbf{Contribution.}
Our main contribution is to propose a novel distributed two-time-scale consensus algorithm, which pertains to the separation in network topology of clustered networks. In particular, one scale is to model the dynamic of the agents in each cluster while one scale is to present the slowly aggregated evolution between the leaders of the clusters. We prove the convergence of the proposed method in the presence of uniform, but possibly arbitrarily large, communication delays between the leaders. In addition, we provided an explicit formula for the convergence rate of such algorithm, which characterizes the impact of delays and the network topology. Our results shows that after a transient time characterized by the topology of each cluster, the convergence of the two-time-scale consensus method only depends on the connectivity of the leaders. Moreover, the results in this paper complements for the existing consensus literature over clustered networks where such a formula of convergence rates is missing. Finally, we validate our theoretical results by a number of numerical simulations on different clustered networks.


The rest of this paper is organized as follows. The problem setting and the proposed algorithm are formalized in Section \ref{sec:prob}. Next, the main result of this paper is presented in Section \ref{sec:results}. Finally, a number of numerical simulations are provided in Section \ref{sec:simulation} to illustrate the theoretical results of this paper. 
\section{Distributed two-time-scale consensus methods}
\label{sec:prob}
\begin{figure}[htpb!]
    \centering
    \includegraphics[width=\linewidth]{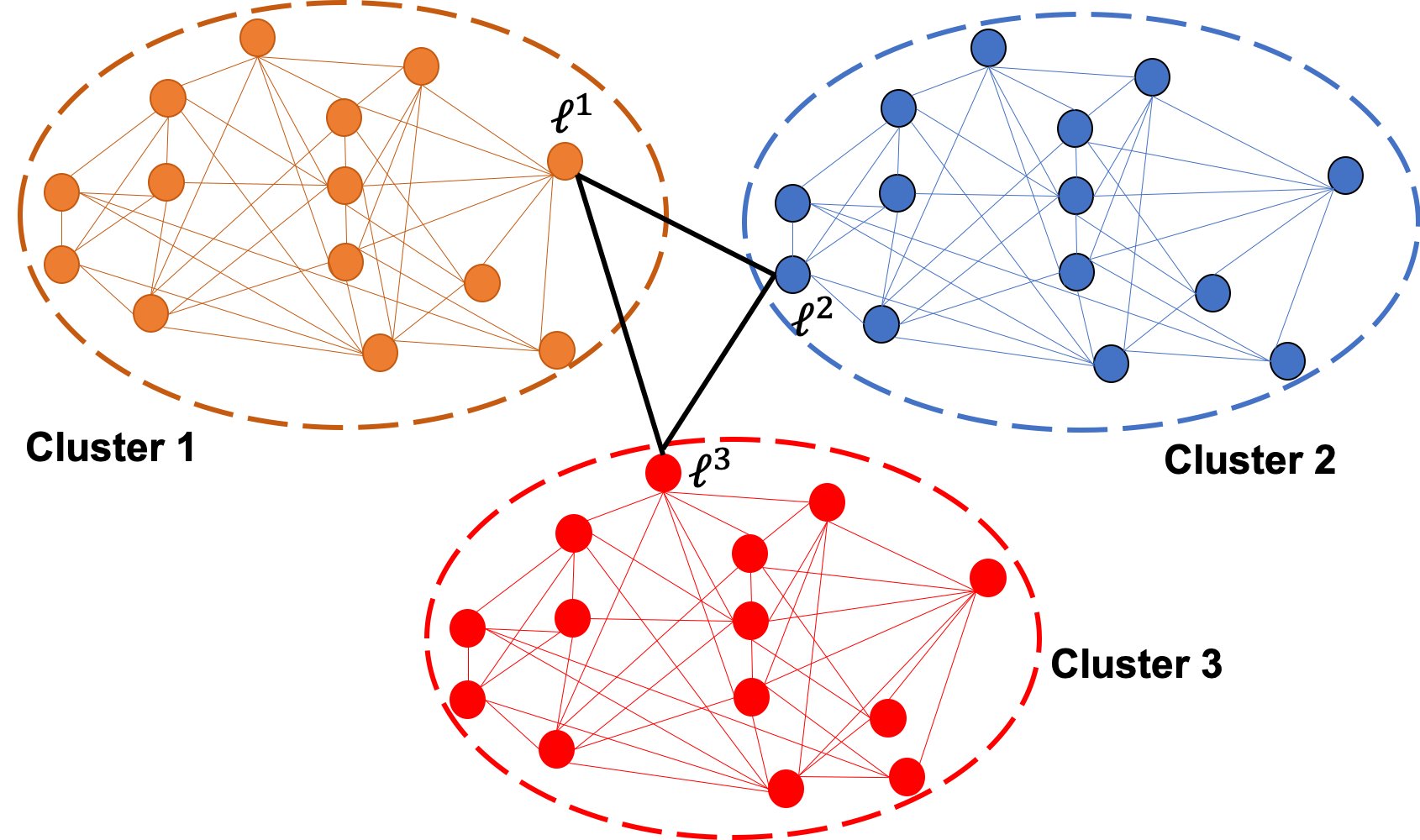}
    \caption{{\small A network partitioned into $3$ densely connected clusters}.}
    \label{fig:cluster}
\end{figure}
We consider a network of $N$ nodes divided into $r$ disjoint connected clusters $\Ccal^{a}$, $a=1,\ldots,r$. The communication pattern between nodes within each cluster $\Ccal^{a}$ is modeled by a densely connected and undirected graph $\Gcal^{a} = (\Vcal^{a},\Ecal^{a})$, where $\Vcal^{a}$ is the set of node indexes and $\Ecal^{a}$ is the set of edges. Thus, we have $\Vcal^{a}\cap\Vcal^{b} = \emptyset$ for all $a\neq b$ and $\sum_{a=1}^{r}|\Vcal^{a}| = N$, where $|\Vcal^{a}|$ denotes the cardinality of the set $\Vcal^{a}$. In addition, the communication between the nodes in different clusters is described by a sparse time-varying graph $\Gcal^{C}(k) = (\Vcal^{C},\{\Ecal^{C}(k)\})$, that is, $(i,j)\in\Ecal^{C}(k)$ if and only if the nodes $i$ and $j$ are in different clusters and there exists an edge between them at time $k$. We  assume that $|\Vcal^{a}\cap\Vcal^{C}| = 1$, $a=1,\ldots,r$, i.e., each cluster has only one node, called \textit{the leader} of $\Ccal^{a}$, that is connected to (some) leaders of other clusters. Similarly, we name other nodes as followers. For simplicity and notational convenience, we assume that all the followers in each cluster is connected to its leader. However, we note that our method derived later can work for more general setting. Finally, the $N$ nodes are connected by a graph $\Gcal = (\Vcal,\Ecal)$, where $\Vcal = \cup_{a=1}^{r}\Vcal^{a}$ and $\Ecal = \cup_{a=1}^{r}\Ecal^{a}\cup \Ecal^{C}$.

An illustrative example of clustered networks is given in Fig. \ref{fig:cluster} where the nodes are divided into $r$ clusters. The connection between nodes in each cluster is modeled by a densely connected graph, while each cluster $\Ccal^{a}$ has one leader $\ell^{a}$, $a = 1,\ldots,r$, connected to other leaders in other clusters (described by black edges).  A concrete motivating example for this problem is coordinated control of multi-robot systems across long distances \cite{Magnus2020}. As the spaces between robots grow they experience larger delays in their communication. The presence of delays, if not properly addressed, may lead to oscillatory behaviors and even instabilities among the robots. To handle the impact of delays, distributed consensus algorithm based on the so-called scattering transformation is proposed in \cite{Magnus2020}. By utilizing passivity theory, the authors can show an asymptotic convergence of their proposed method under the presence of communication delays. Our focus in this paper is to understand the rate of convergence of consensus algorithm over clustered networks, which is missing in \cite{Magnus2020}. Indeed, while passivity theory can help to study the stability (or asymptotic convergence) of consensus algorithms under delays, it may not be applicable to study the finite-time convergence. Therefore, we propose a novel distributed two-time-scale  algorithm for solving consensus problems under delays over clustered networks. Moreover, we carefully characterize the finite-time performance of our proposed methods discussed in details in Section \ref{sec:results}. 

Our proposed method, formally stated in Algorithm \ref{alg:dist_two_time_scale}, is explained as follows. Suppose that each node $i\in\Vcal$ is initialized with some arbitrary value $p_{i}$. This variable may encode the initial local information at each node in $\Gcal$, e.g., the position and velocity of a robot in networked formation control problems. The goal of the nodes is to cooperatively agree at some common value, i.e., they achieve a consensus. Since the graphs $\Gcal^{a}$ are dense while $\Gcal^{C}(k)$ is sparse, information shared between the nodes within each cluster is mixed much faster than the one between the clusters. For example, in Fig.\ \ref{fig:cluster} the rate of information sent from $\ell^{1}$ to its followers in $\Ccal^{1}$ is much faster than the time it gets to the nodes in $\Ccal^{r}$. To model this difference, we propose a novel distributed two-time-scale consensus method, where the updates of the followers in any cluster is implemented at a faster time-scale as compared to the one between the leaders of the clusters. This is described by the use of two different step sizes in Eqs.\ \eqref{alg:follower} and \eqref{alg:leader}.  

\begin{algorithm}
\SetAlgoLined
\textbf{Initialization:} Each follower $i\in\Vcal^{a}$ maintains $x_{i}^{a}$ and each leader $a$ maintains $x_{\ell}^{a}$,  $\forall a\in[1,r]$ 

Each node $i\in\Vcal$ initializes its variable at a constant $p_{i}$

The nodes initialize proper step sizes $\beta\ll\gamma\in(0,1)$.

 \For{k=0,1,2,...}{
  \For{each cluster $a = 1,\ldots,r$}{
    \For{each follower $i\in\Vcal^{a}$}{ 1. Receive $x_{\ell}^{a}(k)$ from its leaders
    
    2. Exchange $x_i^a(k)$ with neighbors $j\in\mathcal{N}_i^{a}$
    
    3. Implement
    \begin{align}
        \hspace{-0.5cm} x_{i}^{a}(k+1)\! &=\! (1-\gamma)\!\!\sum_{j\in\Ncal_{i}^{a}}w_{ij}^{a}x_{j}^{a}(k) + \gamma x_{\ell}^{a}(k).\label{alg:follower}
    \end{align}
    }
    \underline{Leader} $\ell^{a}$: Exchange $x_{\ell}^{a}$ to other leaders $b\in\Ncal_{a}^{C}(k)$ and update
        \begin{align}
        x_{\ell}^{a}(k+1) &= (1-\beta)x_{\ell}^{a}(k)\notag\\ 
        &\quad + \beta \sum_{b\in\Ncal_{a}^{C}(k)}v_{ab}(k)x_{\ell}^{b}(k-\tau).\label{alg:leader}
        \end{align}
  }
 }
\caption{Distributed two-time-scale consensus methods under delays}
\label{alg:dist_two_time_scale}
\end{algorithm}

In particular, for each cluster $\Ccal^{a}$ each follower $i\in\Vcal^{a}$ maintains  a variable $x_{i}^{a}$ and the leader maintains $x_{\ell}^{a}$, both are set at their initial values. The nodes in each $\Ccal^{a}$, $a = 1,\ldots,r$, then consider the updates in \eqref{alg:follower} and \eqref{alg:leader}, where $\Ncal_{i}^{a}$ is the neighboring set of node $i$ in the cluster $\Ccal^{i}$ and $\Ncal_{a}^{C}(k)$ is the neighboring set of leader $a$ in the graph $\Gcal^{C}(k)$ of the clusters at time $k$. The step sizes $\gamma$ and $\beta$ are in $(0,1)$. Moreover, $w_{ij}^{a}$ and $v_{ab}(k)$ are some positive (time-varying) weights, which will be specified shortly. Here, Eq.\ \eqref{alg:follower} is a consensus step between the followers in each cluster $\Ccal^{a}$ and Eq.\ \eqref{alg:leader} is the consensus step between the leaders of the clusters. We allow that the updates of the followers may take into account the value of its leaders but not vice versa.  Finally, the constant $\tau$ in \eqref{alg:leader} represents the communication delays in the information exchange between the leaders. 

In Eqs.\ \eqref{alg:follower} and \eqref{alg:leader}, we use two different step sizes $\gamma,\beta$ to represent the difference in information propagation between the nodes within each cluster and across different clusters. Indeed, since $\gamma$ is associated with the fast-time scale of the followers' updates in each cluster we choose $\gamma \gg\beta$, which corresponds to the slow-time scale of the updates between the clusters. As shown in our numerical experiments in Section \ref{sec:simulation}, the followers' iterates move toward to the leaders' values, which is slowly pushed to a consensus value through step \eqref{alg:leader}. Finally, the step size $\beta$ is also chosen properly to handle the delays as considered in the previous work \cite{Doan2017}.  

\begin{remark}
Recall that in \eqref{alg:follower} we assume that the followers in each cluster is connected to its leader. Such an assumption only helps to reduce the burden notation in our algorithm. Our two-time-scale approach, however, can be applied for solving the consensus problem over general cluster networks without requiring this assumption.  
\end{remark}

\section{Main results}\label{sec:results}
In this section, we analyze the convergence properties of the proposed distributed two-time-scale methods under delays presented in the previous section. Specifically, our results show that under some proper choice of step size $\gamma\gg\beta$ the nodes in the network $\Gcal$ reach a consensus at a rate $\beta/\gamma$. In addition, we provide an explicit formula to show the dependence of this convergence on the network topology and the constant delays $\tau$. By using the two-time-scale approach, we can show that the convergence of the followers within each cluster $\Ccal^a$ only depends on the topology of $\Gcal^
{a}$ while the convergence of the leaders only depends on the sparse graph $\Gcal^{C}$. Since the former convergence happens much faster than the latter, we observe that after a transient time the convergence of Algorithm \ref{alg:dist_two_time_scale} only depends on the connectivity of $\Gcal^{C}$.

We begin our analysis by introducing more notation. We denote by $\Wbf^{a} = [w_{ij}^{a}] \in \Rset^{|\Vcal^{a}|\times |\Vcal^{a}|}$ the weighted adjacency matrix corresponding to $\Gcal^{a}$ at cluster $\Ccal^{a}$. Similarly, let $\Vbf(k) = [v_{ab}(k)]\in \Rset^{r\times r}$ be the time-varying weighted adjacency matrix corresponding to the sequence of graphs $\{\Gcal_{C}(k)\}$ between the clusters. For convenience, we use the following notation
\begin{align*}
\Xbf^{a} \triangleq \left(\begin{array}{c}
(x_{1}^{a})^T    \\
\vdots\\
(x_{n}^{a})^T   
\end{array}\right) \in\Rset^{|\Vcal^{a}|\times d},\;\Xbf_{\ell} \triangleq \left(\begin{array}{c}
(x_{\ell}^{1})^T    \\
\vdots\\
(x_{\ell}^{r})^T   
\end{array}\right) \in\Rset^{r\times d}.   
\end{align*}
Using this notation, the matrix forms of \eqref{alg:follower} and \eqref{alg:leader} are given
\begin{align}
\begin{aligned}
\Xbf^{a}(k+1) &= (1-\gamma)\Wbf^{a}\Xbf^{a}(k) + \gamma \1 x_{\ell}^{a}(k)^T\\
\Xbf_{\ell}(k+1) &= (1-\beta)\Xbf_{\ell}(k) + \beta \Vbf(k)\Xbf_{\ell}(k-\tau),
\end{aligned}\label{alg:matrix_form} 
\end{align}
where we denote by $\1$ a vector with proper dimension whose entries are all equal to the constant $1$. Finally, let $\xbar^{a}\in\Rset^{d}$, $a = 1,\ldots,r$ and $\xbar_{\ell}\in\Rset^{d}$ be the average of the row vectors of $\Xbf^{a}$ and $\Xbf^{\ell}$, respectively, i.e., 
\begin{align*}
\xbar^{a} = \frac{1}{|\Vcal^{a}|}\sum_{i\in\Vcal^{a}}x_{i}^{a}\;\text{ and }\; \xbar_{\ell} = \frac{1}{r}\sum_{a=1}^{r}x_{\ell}^{a}.
\end{align*}
Next, we make an assumption on $\Wbf^{a}$, $a= 1,\ldots,r$, and $\Vbf(k)$ which is fairly standard in the consensus literature to guarantee the convergence of the nodes’ estimates to a consensus point \cite{Olfati-Saber2004}. The assumption given below also imposes a constraint on the communication between the followers at each cluster and between the leaders of the clusters, in which the nodes are only allowed to exchange messages with neighboring nodes, i.e., those directly connected to them. 
\begin{assump}\label{assump:W_follower}
$\Wbf^{a}$, $a = 1,\ldots,r$, is a doubly stochastic matrix, i.e., $\sum_{i}w_{ij}^{a} = \sum_{j}w_{ij}^{a} = 1.$ Moreover, $w_{ii}^{a} > 0$, $\forall i$, and $w_{ij}^{a} > 0$ if and ony if $(i,j)\in\Ecal^{a}$ otherwise $w_{ij}^{a} = 0$.
\end{assump}
\begin{assump}\label{assump:V_leader}
There exists a positive constant $\gamma$ such that $\Vbf(k)$ satisfies the following conditions for all $k\geq0$:
\begin{itemize}
\item[(a)] $v_{aa}(k)\geq\alpha,$ for all $a = 1,\ldots,r$.
\item[(b)] $v_{ab}(k)\in[\alpha,1]$ if $(a,b)\in\mathcal{N}_{a}^{C}(k)$ otherwise $v_{ab}(k)=0$.
\item[(c)] $\sum_{a=1}^{r} v_{ab}(k)=\sum_{b=1}^{r} v_{ab}(k)=1,$ for all $a,b$.
\end{itemize}
\end{assump}
In addition, we assume that the graph $\mathcal{G}^{C}(k)$ is connected at any time $k\geq0$, formally stated as follows.
\begin{assump}\label{assump:BConnectivity}
For all $k\geq 0$, $\mathcal{G}^{C}(k) = (\Vcal,\Ecal(k))$ is connected and undirected.
\end{assump}
By Assumption \ref{assump:W_follower} and since $\Gcal^{a}$ is connected, each $\Wbf^{a}$ has $1$ as the largest singular value. Let $\sigma_{a}$ be the second largest eigenvalue of $\Wbf^{a}$, which by the Perron-Frobenis theorem \cite{Ren2005} we have $\sigma_{a}\in (0,1)$. Similarly, we denote by $\sigma(\Vbf(k))$ the second largest singular value of $\Vbf(k)$. Furthermore, let $\delta_{C}$ be a parameter representing the spectral properties of the time-varying graph $\Gcal^{C}(k)$ defined as
\begin{align}
\delta_{C} = \max_{k\geq 0}\sigma(\Vbf(k)).\label{notation:delta}
\end{align}
Assumptions \ref{assump:W_follower} and \ref{assump:BConnectivity} imply that $\delta_{C}\in(0,1)$. We now present the main steps in our analysis. We note that some of the analysis below is quite standard in the existing literature. We include them in this paper for completeness. Our first result is to show that the followers in each cluster reach a consensus exponentially, stated in the following lemma.  

\begin{lem}\label{lem:consensus_follower}
The sequence $\{\Xbf^{a}(k)\}$ generated by \eqref{alg:matrix_form} satisfies
\begin{align}
\|\Xbf^{a}(k) - \xbar^{a}(k)^T\1\| \leq \left((1-\gamma)\sigma_{a}\right)^{k+1}\|\Xbf^{a}(0)\|,\; \forall a.   \label{lem:consensus_follower:ineq}
\end{align}
\end{lem}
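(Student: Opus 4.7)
My plan is to introduce the consensus error matrix $\Ebf^{a}(k) \triangleq \Xbf^{a}(k) - \1(\xbar^{a}(k))^T$, derive a clean autonomous recursion for it that eliminates the leader input $x_{\ell}^{a}(k)$, and then exploit the spectral gap of $\Wbf^{a}$ on the hyperplane orthogonal to $\1$.

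First, I would compute the update rule for $\xbar^{a}$. Left-multiplying the first line of \eqref{alg:matrix_form} by $\tfrac{1}{|\Vcal^{a}|}\1^T$ and using double stochasticity of $\Wbf^{a}$ (Assumption \ref{assump:W_follower}), which gives $\1^T\Wbf^{a}=\1^T$, yields
\begin{align*}
(\xbar^{a}(k+1))^T = (1-\gamma)(\xbar^{a}(k))^T + \gamma (x_{\ell}^{a}(k))^T.
\end{align*}
Lifting this back by $\1$ and subtracting from \eqref{alg:matrix_form} cancels the $\gamma\1(x_{\ell}^{a}(k))^T$ term, producing
\begin{align*}
\Ebf^{a}(k+1) = (1-\gamma)\bigl[\Wbf^{a}\Xbf^{a}(k)-\1(\xbar^{a}(k))^T\bigr] = (1-\gamma)\Wbf^{a}\Ebf^{a}(k),
\end{align*}
where the second equality uses $\Wbf^{a}\1=\1$ (also from Assumption \ref{assump:W_follower}).

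Next, I observe that $\1^T\Ebf^{a}(k)=0$ by construction, so every column of $\Ebf^{a}(k)$ lies in $\1^{\perp}$. Because $\Gcal^{a}$ is undirected, $\Wbf^{a}$ is symmetric; combined with the Perron–Frobenius/spectral theorem, $\Wbf^{a}$ admits an orthonormal eigenbasis whose leading eigenvector is $\1/\sqrt{|\Vcal^{a}|}$, and its operator norm on $\1^{\perp}$ equals $\sigma_{a}$. Applying this column-wise (which works for either the Frobenius or the spectral norm) gives $\|\Wbf^{a}\Ebf^{a}(k)\|\le \sigma_{a}\|\Ebf^{a}(k)\|$, hence the one-step contraction
\begin{align*}
\|\Ebf^{a}(k+1)\| \le (1-\gamma)\sigma_{a}\,\|\Ebf^{a}(k)\|.
\end{align*}

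Finally, I would iterate this bound from $k=0$ and control the initial error via $\Ebf^{a}(0) = (I - \tfrac{1}{|\Vcal^{a}|}\1\1^T)\Xbf^{a}(0)$; since the centering matrix is an orthogonal projector it has unit operator norm, so $\|\Ebf^{a}(0)\|\le\|\Xbf^{a}(0)\|$, giving the stated estimate \eqref{lem:consensus_follower:ineq}. The main (very mild) obstacle is being careful that the contraction factor $\sigma_{a}$ genuinely governs $\Wbf^{a}$ on $\1^{\perp}$, which requires the symmetry and positivity conditions in Assumption \ref{assump:W_follower} together with connectedness of $\Gcal^{a}$; once these are invoked, the remaining manipulations are entirely routine.
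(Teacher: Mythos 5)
Your argument is correct and is essentially identical to the paper's own proof: the paper likewise defines the error $\Ybf^{a}=\Xbf^{a}-\1(\xbar^{a})^T$, derives $\xbar^{a}(k+1)=(1-\gamma)\xbar^{a}(k)+\gamma x_{\ell}^{a}(k)$ from double stochasticity, obtains the autonomous recursion $\Ybf^{a}(k+1)=(1-\gamma)\Wbf^{a}\Ybf^{a}(k)$, and contracts via $\sigma_{a}$ on $\1^{\perp}$ together with $\|\Ybf^{a}(0)\|\le\|\Xbf^{a}(0)\|$. If anything, your version with exponent $k$ is the cleanly justified one, since the paper's stated exponent $k+1$ fails at $k=0$ unless one absorbs an extra contraction step into the bound on the initial error.
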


\begin{proof}
Denote by $\Ybf^{a}\! =\! \Xbf^{a} - \1(\xbar^{a})^T=(\Ibf-\frac{1}{n}\1\1^T)\Xbf^{a}$. Since $\Wbf^{a}$ is a doubly stochastic matrix, using \eqref{alg:matrix_form} we have
\begin{align}
\xbar^{a}(k+1) &= \frac{1}{|\Vcal^{a}|}\sum_{i\in\Vcal^{a}}x^a_{i}(k+1)\notag\\ 
&= (1-\gamma)\xbar^{a}(k) + \gamma x_{\ell}^{a}(k).
\label{lem:consensus_follower:eq1a}     
\end{align}
By \eqref{alg:matrix_form} and \eqref{lem:consensus_follower:eq1a} we consider for each $a = 1,\ldots,r$
\begin{align*}
\Ybf^{a}(k+1) &= \Xbf^{a}(k+1) - \1\xbar^{a}(k+1)^T\\
&= (1-\gamma)\Wbf^{a}\Xbf^{a}(k) - (1-\gamma)\1\xbar^{a}(k)^T\notag\\ 
&= (1-\gamma)\Wbf^{a}\Ybf^{a}(k),
\end{align*}
which by using $\|\Ybf^{a}\|\leq \|\Xbf^{a}\|$ yields \eqref{lem:consensus_follower:ineq}, i.e.,
\begin{align*}
\|\Ybf^{a}(k+1)\| &= \|(1-\gamma)\Wbf^{a}\Ybf^{a}(k)\| \leq (1-\gamma)\sigma_{a}\|\Ybf^{a}(k)\|\notag\\
&\leq \left((1-\gamma)\sigma_{a}\right)^{k+1}\|\Ybf^{a}(0)\|,   
\end{align*}
where the second inequality is to Assumption \ref{assump:W_follower}, i.e., 
\begin{align*}
\|\Wbf\Ybf^{a}\| = \left\|\Wbf\left(\Ibf-\frac{1}{n}\1\1^T\right)\Xbf^{a}\right\| \leq \sigma_{a}\|\Xbf^{a}\|.     
\end{align*}

\end{proof}
We note that this lemma only states that the followers in each cluster agree at a common point. However, these points might be different for different clusters, i.e., the nodes in different clusters might not agree with each other. Our next result is to study the consensus between the leaders under communication delays $\tau$. We note that under communication delays the estimate in \eqref{alg:leader} depends on the time interval $[k-\tau,k]$ for all $k\geq 0$. We, therefore, utilize the discrete-time variant of the $Gr\ddot{o}nwall$-$Bellman$ Inequality \cite{Khalil2002}, to hand such a dependence. The following lemma is to show that the leaders reach an agreement under some proper choice of step sizes. The analysis is motivated by the one in \cite{Doan2017}.


\begin{lem}\label{lem:leader_consensus}
Let $\beta\in\left(0,1-e^{-\frac{\ln(1/\delta_{C})}{\tau}}\right)$ and $\eta$ be defined as
\begin{align}
\eta \triangleq 1-\beta+\frac{\delta_{C}\beta}{(1-\beta)^{\tau}}\cdot\label{notation:eta}
\end{align}
Then the sequence $\{\Xbf_{\ell}(k)\}$ of the leaders satisfies
\begin{align}
\|\Xbf_{\ell}(k) - \1\xbar_{\ell}(k)^{T}\| \leq 2\eta^{k} \|\Xbf_{\ell}(0)\|.\label{lem:leader_consensus:ineq}    
\end{align}
\end{lem}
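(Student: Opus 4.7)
The plan is to reduce the lemma to a scalar delay inequality and then apply a discrete Gr\"onwall--Bellman argument. I would first pass to the centered error matrix $\Ybf_{\ell}(k) \triangleq \Xbf_{\ell}(k) - \1\xbar_{\ell}(k)^{T}$. Left-multiplying the leaders' update in \eqref{alg:matrix_form} by $\frac{1}{r}\1^{T}$ and using $\1^{T}\Vbf(k)=\1^{T}$ (Assumption \ref{assump:V_leader}(c)) yields the cluster-average dynamics
\[
\xbar_{\ell}(k+1)^{T} = (1-\beta)\xbar_{\ell}(k)^{T} + \beta\xbar_{\ell}(k-\tau)^{T}.
\]
Subtracting $\1$ times this identity from the leaders' update, and using $\Vbf(k)\1=\1$ to fold $\1\xbar_{\ell}(k-\tau)^{T}$ back into the operator, gives the clean recursion
\[
\Ybf_{\ell}(k+1) = (1-\beta)\Ybf_{\ell}(k) + \beta\Vbf(k)\Ybf_{\ell}(k-\tau).
\]
Because every column of $\Ybf_{\ell}$ is orthogonal to $\1$, the definition \eqref{notation:delta} bounds $\|\Vbf(k)\Ybf_{\ell}(k-\tau)\|\leq \delta_{C}\|\Ybf_{\ell}(k-\tau)\|$, reducing the proof to the scalar delay inequality $\|\Ybf_{\ell}(k+1)\| \leq (1-\beta)\|\Ybf_{\ell}(k)\| + \beta\delta_{C}\|\Ybf_{\ell}(k-\tau)\|$.

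Next, I would solve this inequality by the substitution $u(k)=(1-\beta)^{-k}\|\Ybf_{\ell}(k)\|$, which converts it to $u(k+1)\leq u(k)+C\,u(k-\tau)$ with $C = \beta\delta_{C}/(1-\beta)^{\tau+1}$. Introducing the running maximum $U(k)=\max_{-\tau\leq j\leq k}u(j)$, this gives $U(k+1)\leq(1+C)U(k)$, and hence $u(k)\leq(1+C)^{k}U(0)$. Multiplying back by $(1-\beta)^{k}$ returns the geometric decay $\|\Ybf_{\ell}(k)\|\leq \eta^{k}U(0)$, where $\eta=(1-\beta)(1+C)$ coincides with the constant in \eqref{notation:eta}. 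Estimating the initial history $U(0)$ by $\|\Xbf_{\ell}(0)\|$, using $\|\Ybf_{\ell}(0)\|\leq\|\Xbf_{\ell}(0)\|$ from the fact that $I-\tfrac{1}{r}\1\1^{T}$ is an orthogonal projector and absorbing any remaining slack into the absolute constant of \eqref{lem:leader_consensus:ineq}, then delivers the claim.

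The main obstacle is the delay term $\beta\delta_{C}\|\Ybf_{\ell}(k-\tau)\|$: without it, geometric contraction at rate $1-\beta$ is immediate, but with it the effective rate must be enlarged to absorb past growth, and it is not a priori obvious that any $\eta<1$ is achievable. I would close this by checking that the hypothesis $\beta\in\bigl(0,\,1-e^{-\ln(1/\delta_{C})/\tau}\bigr)$ is exactly the condition $(1-\beta)^{\tau}>\delta_{C}$, which forces the second summand $\beta\delta_{C}/(1-\beta)^{\tau}$ in \eqref{notation:eta} to be strictly less than $\beta$, and hence $\eta<1$, certifying that the Gr\"onwall bound is genuinely contractive.
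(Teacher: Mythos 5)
Your proposal is correct and follows essentially the same route as the paper: center via $\Ybf_{\ell}=\Xbf_{\ell}-\1\xbar_{\ell}^{T}$, use double stochasticity of $\Vbf(k)$ to obtain $\Ybf_{\ell}(k+1)=(1-\beta)\Ybf_{\ell}(k)+\beta\Vbf(k)\Ybf_{\ell}(k-\tau)$, bound $\|\Vbf(k)\Ybf_{\ell}\|\leq\delta_{C}\|\Ybf_{\ell}\|$ on the complement of $\1$, and close with a discrete Gr\"onwall--Bellman argument whose contractivity is exactly the condition $(1-\beta)^{\tau}>\delta_{C}$. The only difference is in the mechanics of the Gr\"onwall step --- you rescale by $(1-\beta)^{-k}$ and track a running maximum over the delay window, whereas the paper unrolls the recursion into a convolution sum and iterates a cumulative quantity $z(k)$ --- and your version in fact yields the slightly sharper constant $\eta^{k}\|\Xbf_{\ell}(0)\|$ in place of the paper's $2\eta^{k}\|\Xbf_{\ell}(0)\|$ (both use the constant initial history on $[-\tau,0]$ to control the starting data).
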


\begin{proof}
We denote by $\Ybf_{\ell} =  \Xbf_{\ell} - \1\xbar_{\ell}^T = (\Ibf-\frac{1}{n}\1\1^T)\Xbf_{\ell}$. Since $\Vbf(k)$ satisfies Assumption \ref{assump:V_leader}(c), using \eqref{alg:matrix_form} we have  
\begin{align*}
&\Ybf_{\ell}(k+1) = (1-\beta)\Xbf_{\ell}(k) + \beta\Vbf(k)\Xbf_{\ell}(k-\tau)\\
&\hspace{2cm} - (1-\beta)\1\xbar_{\ell}(k)^T + \beta \1\xbar_{\ell}(k-\tau)^T\\
&= (1-\beta)\Ybf_{\ell}(k) + \beta\Vbf(k)\Ybf_{\ell}(k-\tau)\\
&=(1-\beta)^{k+1}\Ybf_{\ell}(0) + \beta\sum_{t=0}^{k}V(t)\Ybf_{\ell}(t-\tau)(1-\beta)^{k-t},
\end{align*}
which by \eqref{notation:delta} implies that
\begin{align}
\|\Ybf_{\ell}(k+1)\| &\leq  (1-\beta)^{k+1}\|\Ybf_{\ell}(0)\|\notag\\ 
&\quad+ \beta\sum_{t=0}^{k}(1-\beta)^{k-t}\|V(t)\Ybf_{\ell}(t-\tau)\|\notag\\
&\leq (1-\beta)^{k+1}\|\Ybf_{\ell}(0)\|\notag\\ 
&\quad + \delta_{C}\beta\sum_{t=0}^{k}(1-\beta)^{k-t}\|\Ybf_{\ell}(t-\tau)\|, \label{lem:leader_consensus:eq1}
\end{align}
where the second inequality is due to Assumptions \ref{assump:V_leader} and \ref{assump:BConnectivity} 
\begin{align*}
\|V(t)\Ybf_{\ell}(t-\tau)\|\leq \sigma_{C}\|\Ybf_{\ell}(t-\tau)\|.
\end{align*}

We now apply the discrete-time variant of the $Gr\ddot{o}nwall$-$Bellman$ Inequality \cite{Khalil2002} to handle the right-hand side of \eqref{lem:leader_consensus:eq1}. Let $z(k)$ be defined as 
\begin{align*}
z(k) =  \sum_{t=0}^{k}(1-\beta)^{-t}\|\Ybf_{\ell}(t-\tau)\| 
\end{align*}
Thus we have $z(-1) = 0$ and $z(k)$ is a nondecreasing nonnegative function of time. Moreover, by \eqref{lem:leader_consensus:eq1} we have 
\begin{align*}
\|\Ybf_{\ell}(k+1)\| \leq  (1-\beta)^{k+1}\|\Ybf_{\ell}(0)\| + \delta_{C}\beta(1-\beta)^{k} z(k).  
\end{align*}
Consider 
\begin{align*}
&z(k+1) - z(k)= \left(1-\beta\right)^{-k-1}\|\Ybf_{\ell}(k+1-\tau)\|, 
\end{align*}
which implies that 
\begin{align*}
z(k+1) &= \left(1-\beta\right)^{-k-1}\|\Ybf_{\ell}(k+1-\tau)\| + z(k)\nonumber\\
&\leq (1-\beta)^{-\tau}\|\Ybf_{\ell}(0)\| + \frac{\delta_{C}\beta}{(1-\beta)^{\tau+1}}z(k-\tau) + z(k) \\
&\leq  (1-\beta)^{-\tau}\|\Ybf_{\ell}(0)\| + \left(1+ \frac{\delta_{C}\beta}{(1-\beta)^{\tau+1}}\right)z(k)\\
&\leq \frac{(1-\beta)\|\Ybf_{\ell}(0)\|}{\delta_{C}\beta} \left(1+\frac{\delta_{C}\beta}{\left(1-\beta\right)^{\tau+1}}\right)^{k+1},
\end{align*}
where we use $w(-1) = 0$ in the third inequality. Substituting the previous relation into \eqref{lem:leader_consensus:eq1} we have
\begin{align}
&\|\Ybf_{\ell}(k+1)\|\notag\\ 
&\leq (1-\beta)^{k+1}\|\Ybf_{\ell}(0)\|\nonumber\\  
&\quad + \|\Ybf_{\ell}(0)\|(1-\beta)^{k+1}\left(1+\frac{\delta_{C}\beta}{\left(1-\beta\right)^{\tau+1}}\right)^{k+1}\notag\\
&= (1-\beta)^{k+1}\|\Ybf_{\ell}(0)\|\notag\\ 
&\quad + \|\Ybf_{\ell}(0)\|\left(1 - \beta +\frac{\delta_{C}\beta}{\left(1-\beta\right)^{\tau}}\right)^{k+1}.\label{lem:leader_consensus:eq2}  
\end{align}
Since $\beta\in\left(0,1-e^{-\frac{\ln (1/\delta_{C})}{\tau}}\right)$ we have
\begin{align*}
\eta \triangleq 1-\beta+\frac{\delta_{C}\beta}{(1-\beta)^{\tau}}<1.
\end{align*}
Thus, by \eqref{lem:leader_consensus:eq2} and since $\|\Ybf_{\ell}\| \leq \|\Xbf_{\ell}\|$ we have \eqref{lem:leader_consensus:ineq}.
\end{proof}
Lemma \ref{lem:leader_consensus} states that under a proper choice of step size $\beta$, the leaders' iterates converge exponentially to the same value even under the presence of delays. On the other hand, Lemma \ref{lem:consensus_follower} states that the followers in each cluster agree at the same value. We now present a result to show that the followers' iterates in each cluster follows its leader' value.     


\begin{lem}\label{lem:leader_follower}
Let $P = \max_{i}\|p_{i}\|$. We have
\begin{align}
\|\xbar^{a}(k) - x_{\ell}^{a}(k)\| \leq  (1-\gamma)^{k}\|\xbar^{a}(0) - x_{\ell}^{a}(0)\| + \frac{2P\beta}{\gamma}.   \label{lem:leader_follower:ineq}
\end{align}
\end{lem}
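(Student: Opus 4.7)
The plan is to introduce the error $e^{a}(k) \triangleq \xbar^{a}(k) - x_{\ell}^{a}(k)$ and derive a one-step recursion for it of the form $\|e^{a}(k+1)\|\leq(1-\gamma)\|e^{a}(k)\|+2P\beta$, then unroll the recursion and sum the resulting geometric series.

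First I would recall the averaged follower recursion already established in the proof of Lemma~\ref{lem:consensus_follower}, namely $\xbar^{a}(k+1) = (1-\gamma)\xbar^{a}(k)+\gamma x_{\ell}^{a}(k)$, and combine it with the leader update \eqref{alg:leader}. Subtracting one from the other and regrouping, the $\gamma x_{\ell}^{a}(k)$ term on the follower side combines with the $(1-\beta)x_{\ell}^{a}(k)$ term on the leader side to produce $(1-\gamma)x_{\ell}^{a}(k)$, which lets me factor out $(1-\gamma)$ from the leading piece. Concretely,
\begin{align*}
e^{a}(k+1) &= (1-\gamma)e^{a}(k) + \beta\Bigl(x_{\ell}^{a}(k) - \sum_{b\in\Ncal_{a}^{C}(k)}v_{ab}(k)x_{\ell}^{b}(k-\tau)\Bigr).
\end{align*}

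Next I would bound the bracketed perturbation term by $2P$. The key observation is that each leader update \eqref{alg:leader} is a convex combination of previously computed leader values, since $(1-\beta)+\beta\sum_{b}v_{ab}(k)=1$ by Assumption~\ref{assump:V_leader}(c). Together with the initialization $x_{\ell}^{a}(s)=p_{\ell^{a}}$ for $s\leq 0$ and the definition $P=\max_{i}\|p_{i}\|$, a straightforward induction on $k$ gives $\|x_{\ell}^{a}(k)\|\leq P$ for all $a$ and $k\geq -\tau$. Applying the triangle inequality together with $\sum_{b}v_{ab}(k)=1$ then yields the uniform bound $\|x_{\ell}^{a}(k) - \sum_{b}v_{ab}(k)x_{\ell}^{b}(k-\tau)\|\leq 2P$.

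Finally I would take norms in the recursion to obtain $\|e^{a}(k+1)\|\leq(1-\gamma)\|e^{a}(k)\|+2P\beta$, and iterate $k$ times to get
\begin{align*}
\|e^{a}(k)\| &\leq (1-\gamma)^{k}\|e^{a}(0)\| + 2P\beta\sum_{j=0}^{k-1}(1-\gamma)^{j} \\
&\leq (1-\gamma)^{k}\|\xbar^{a}(0)-x_{\ell}^{a}(0)\| + \frac{2P\beta}{\gamma},
\end{align*}
which is exactly \eqref{lem:leader_follower:ineq}. The only mildly delicate step is the uniform boundedness of the leader iterates, which relies on interpreting \eqref{alg:leader} as a convex combination under the doubly stochastic property of $\Vbf(k)$ and a sensible initialization of the delayed history; everything else is a direct geometric-sum computation, so I do not anticipate a serious obstacle.
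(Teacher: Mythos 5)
Your proposal is correct and follows essentially the same route as the paper: the same subtraction of the averaged follower recursion from the leader update to get $(1-\gamma)e^{a}(k)$ plus a $\beta$-scaled perturbation, the same uniform bound $\|x_{\ell}^{a}(k)\|\leq P$ via the convex-combination structure, and the same geometric-sum unrolling to obtain $2P\beta/\gamma$. The only cosmetic difference is that the paper also records the (unneeded here) bound on the followers' iterates.
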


\begin{proof}
We first note that $x_{i}(k) = p_{i}$ for all $i\in\Vcal$ and $k\in[-\tau,0]$. In addition, since each node $i\in\Vcal$ only considers the consensus updates \eqref{alg:follower} and  \eqref{alg:leader}, the nodes' iterates are always bounded, i.e., 
\begin{align}
\hspace{-0.5cm}\left\{\!\!\!
\begin{array}{l}
\|x_{i}^{a}(k)\| \leq (1-\gamma)P + \gamma P = P\\
\|x_{\ell}^{a}(k)\| \leq (1-\beta)P + \beta P = P
\end{array}\right.\!\!\!\!,\; \forall i\in\Vcal^{a}, \forall a, \forall k\geq 0.\label{lem:x_bound:ineq}   
\end{align}
Next, for any cluster $a$, by \eqref{lem:consensus_follower:eq1a} and \eqref{alg:leader} we have
\begin{align*}
\xbar^{a}(k+1) - x_{\ell}^{a}(k+1)&= (1-\gamma)(\xbar^{a}(k) - x_{\ell}^{a}(k))+ \beta x_{\ell}^{a}(k)\notag\\
&\quad - \beta \sum_{b\in\Ncal_{a}^{C}(k)}v_{ab}(k)x_{\ell}^{b}(k-\tau),
\end{align*}
which by using \eqref{lem:x_bound:ineq} yields
\begin{align*}
&\|\xbar^{a}(k+1) - x_{\ell}^{a}(k+1)\| \notag\\
&\leq (1-\gamma)\|\xbar^{a}(k) - x_{\ell}^{a}(k)\| + 2P\beta\notag\\
&\leq (1-\gamma)^{k+1}\|\xbar^{a}(0) - x_{\ell}^{a}(0)\| + 2P\beta\sum_{t=0}^{k}(1-\gamma)^{k-t}\notag\\
&\leq (1-\gamma)^{k+1}\|\xbar^{a}(0) - x_{\ell}^{a}(0)\| + \frac{2P\beta}{\gamma}\cdot
\end{align*}
\end{proof}
By putting the results in Lemmas \ref{lem:consensus_follower}--\ref{lem:leader_follower}, one can show that the nodes in the network achieve a consensus even under communication delays. The following theorem is to formally state this result.  

\begin{thm}\label{thm:main}
Let Assumptions \ref{assump:W_follower}--\ref{assump:BConnectivity} hold. Let the sequence $\{x_{i}^{a}(k)\}$ and $\{x_{\ell}^{a}(k)\}$, for all $i\in\Vcal^{a}$ and $a = 1,\ldots,r$, be generated by \eqref{alg:follower} and \eqref{alg:leader}. Then we have $\forall a$ and $\forall i\in\Vcal^{a}$
\begin{align}
\|x_{i}^{a}(k) - \xbar_{\ell}(k)\| &\leq \left((1-\gamma)\sigma_{a}\right)^{k}\|\Xbf^{a}(0)\| +  2\eta^{k}\|\Xbf_{\ell}(0)\|\notag\\ 
&\quad + (1-\gamma)^{k}\|\xbar^{a}(0) - x_{\ell}^{a}(0)\| + \frac{2P\beta}{\gamma}\cdot   \label{thm:ineq}
\end{align}
\end{thm}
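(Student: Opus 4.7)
The plan is to obtain \eqref{thm:ineq} by a straightforward triangle-inequality decomposition that routes $x_i^a(k)$ to the limit value $\xbar_\ell(k)$ through two intermediate quantities, namely the cluster average $\xbar^a(k)$ and the leader iterate $x_\ell^a(k)$. Concretely, I will write
\begin{align*}
\|x_i^a(k) - \xbar_\ell(k)\| &\leq \|x_i^a(k) - \xbar^a(k)\| + \|\xbar^a(k) - x_\ell^a(k)\| \\
&\quad + \|x_\ell^a(k) - \xbar_\ell(k)\|,
\end{align*}
so that each of the three intermediate distances is controlled by one of the three preceding lemmas.

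For the first term, Lemma \ref{lem:consensus_follower} gives the bound $((1-\gamma)\sigma_a)^k\|\Xbf^a(0)\|$ directly, since $\|x_i^a(k)-\xbar^a(k)\|$ is dominated by the row-norm $\|\Xbf^a(k)-\1\xbar^a(k)^T\|$. For the third term, Lemma \ref{lem:leader_consensus} similarly yields $2\eta^k\|\Xbf_\ell(0)\|$, because $\|x_\ell^a(k)-\xbar_\ell(k)\|$ is at most $\|\Xbf_\ell(k)-\1\xbar_\ell(k)^T\|$. The middle term is exactly what Lemma \ref{lem:leader_follower} bounds, contributing $(1-\gamma)^k\|\xbar^a(0)-x_\ell^a(0)\|+2P\beta/\gamma$.

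Summing the three pieces reproduces the right-hand side of \eqref{thm:ineq}, so the argument reduces to a mechanical application of the triangle inequality together with the three lemmas already established. There is no genuine obstacle here: all spectral, delay, and boundedness estimates have been absorbed in the lemmas, and the only subtlety is keeping track of which norm (matrix Frobenius/operator norm on stacked iterates versus Euclidean norm on individual rows) dominates which, and observing that extracting a single row of a matrix can only decrease its norm. A brief remark in the proof will note this norm-comparison, after which the three lemma bounds are substituted in the indicated order to conclude.
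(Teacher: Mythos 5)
Your proposal matches the paper's own proof: the same three-term decomposition through $\xbar^{a}(k)$ and $x_{\ell}^{a}(k)$, followed by the triangle inequality and substitution of Lemmas \ref{lem:consensus_follower}--\ref{lem:leader_follower}. Your explicit remark that a single row's norm is dominated by the stacked matrix norm is a detail the paper leaves implicit, but the argument is identical.
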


\begin{proof}
We have
\begin{align*}
&x_{i}^{a}(k) - \xbar_{\ell}(k)\notag\\
&= x_{i}^{a}(k) - \xbar^{a}(k) + \xbar^{a}(k) - x_{\ell}^{a}(k) + x_{\ell}^{a}(k) - \xbar_{\ell}(k).
\end{align*}
Using the results in Lemmas \ref{lem:consensus_follower}--\ref{lem:leader_follower} and the triangle inequality immediately gives our result in \eqref{thm:ineq}.
\end{proof}

\begin{remark}
Here, we make some comments on the results given in Theorem \ref{thm:main}. 

1. As mentioned, to present the difference in the time scale between the dynamics of the followers and clusters, we choose $\beta \ll \gamma \in (0,1)$. Thus, Eqs.\ \eqref{lem:leader_consensus:ineq} and \eqref{thm:ineq} imply that the nodes converge arbitrarily close to each other exponentially. 

2. Since each graph $\Gcal^{a}$ is denser than the graph $\Gcal^{C}$ of the leaders, $\sigma_{a} \ll \sigma_{C} < 1$. By \eqref{thm:ineq} the exponential rate essentially depends on $\eta$, a function of $\sigma_{C}$ and the delays $\tau$. This show that after a transient time characterized by the topology of each $\Gcal^{a}$, the convergence  of the two-time-scale algorithms only depends on the connectivity of $\Gcal^C$ and the delays between leaders. This observation agrees with the one using singular perturbation theory \cite{Boker16,Awad19,JChow85,Biyik08}. We investigate further this observation numerically in Section \ref{sec:simulation}. 

3. One particular choice of the step sizes is $\gamma = \beta^{1/3}$. Under this choice, Eqs.\ \eqref{lem:leader_consensus:ineq} and \eqref{thm:ineq} shows that the nodes in the network reach a consensus at a rate $\beta^{2/3}$. This rate is similar to the one studied in (distributed) linear two-time-scale stochastic approximation; see for example \cite{Doan,Doan2020,Konda2004}. For example, let $T$ be a positive integer. If we run the algorithm in $T$ steps and let $\beta = 1/T$, then the algorithm converges at a rat $1/T^{2/3}$. 
\end{remark}

\section{Numerical Simulations}
\label{sec:simulation}
In this section, we investigate the impact of time delays and network structure on the performance of the two-time scale method, Algorithm \ref{alg:dist_two_time_scale}, over clustered networks. In particular, we use two different step sizes $\gamma,\beta$ in \textit{Algorithm 1} to represent the difference in information propagation between the nodes within each cluster and across different clusters, and two examples of clustered networks to see the effect of network structure to convergence speed. Since $\gamma$ is associated with the fast-time scale of the followers' updates in each cluster, we choose $\gamma \gg\beta$, where $\beta$ corresponds to the slow-time scale of the updates between the clusters. Moreover, $\beta$ is also chosen properly to handle the communication delays in the information exchange between the leaders. 

\subsection{Small network analysis}
\label{sim-1}
We first consider the performance of Algorithm \ref{alg:dist_two_time_scale} on a small network, i.e., a clustered network consisting of 60 agents divided into 3 clusters where nodes 1, 21, and 41 are chosen as leaders (see Fig. \ref{nw_image}). In each cluster $a$, a node is connected to the nearest two nodes, i.e., $|\Ncal^a_i =2|$ for all $i\in\Vcal^{a}$. Here, intra-cluster structure is not dense, which is purpose-built to compare with the dense cluster structure in Section \ref{sim-2}. 

\begin{figure}[htpb!]
    \centering
    \includegraphics[width=1\linewidth]{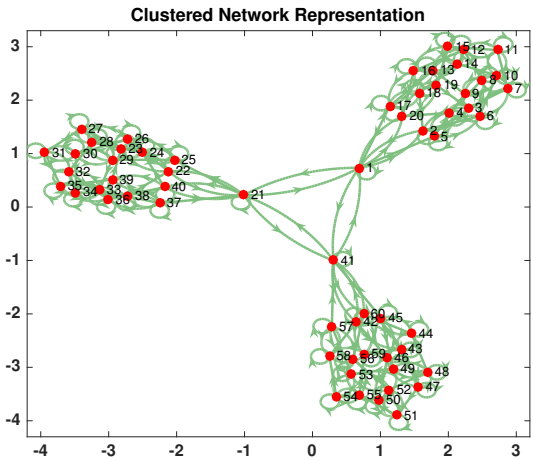}
    \caption{A 3-cluster network consisting of 60 nodes.}
    \label{nw_image}
\end{figure}

The adjacency matrix $\Wbf^a$ of each cluster $\Ccal^a$ is chosen as 
\begin{align*}\label{ex: Metropolis}
    \Wbf^a = [w^a_{ij}] = \left\{\begin{array}{lr}
        \frac{1}{|\Ncal^a_i|+1}, &\text{if} \; (i,j) \in \Ecal^a,\\
        0, &\text{if} \; (i,j) \notin  \Ecal^a, i \neq j,\\
        1 - \sum_{j \in \Ncal^a_i} w^a_{ij}, &\text{if} \; i = j.
    \end{array}\right.
\end{align*}
Similarly, we choose the adjacency matrix $\Vbf(k)$ corresponding to the sequence of graphs $\{\Gcal_{C}(k)\}$  between the clusters as above. It is straightforward to verify that the matrices $\Wbf^a$ and $\Vbf(k)$ satisfy Assumption 1 and 2. Finally, the initial conditions of all the nodes are chosen arbitrarily in $[-4, 4]$.


First, we investigate the performance of Algorithm \ref{alg:dist_two_time_scale} for two cases, namely $\beta = 1$ and $\beta = 0.1$ when the delays $\tau = 10$. That is, when $\beta = 1$ we simply consider a distributed consensus algorithm without handling the impact of delays $\tau$. As shown in Fig. \ref{nw_large_beta}, the consensus over the clustered network is not achieved, i.e., there is an oscillation among the nodes' iterates. This phenomenon is also observed in \cite{Magnus2020}. 

To handle this impact of time delays, we choose the step sizes $\beta= 0.1$ and $\gamma = 0.5$ which satisfy the condition in Lemma \ref{lem:leader_consensus}. In this case, all agents can achieve a consensus, as shown in Fig. \ref{nw_small_beta}. Moreover, it can be observed that the consensus within each cluster is achieved around $7s$, while the one between  clusters happens at $50s$. It means that the information shared inside each cluster is much faster than that between clusters, which agrees with our results in Theorem \ref{thm:main}.

\begin{figure}[htpb!]
    \centering
    \includegraphics[width=1\linewidth]{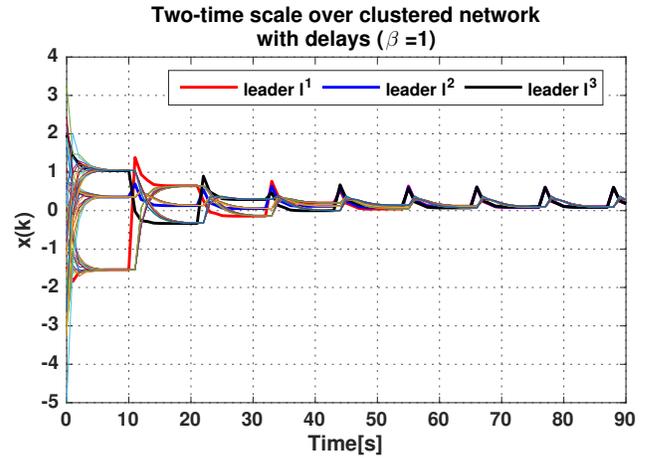}
    \caption{States of agents with $\tau = 10s$ and $\beta = 1$.}
    \label{nw_large_beta}
\end{figure}
\begin{figure}[htpb!]
    \centering
    \includegraphics[width=1\linewidth]{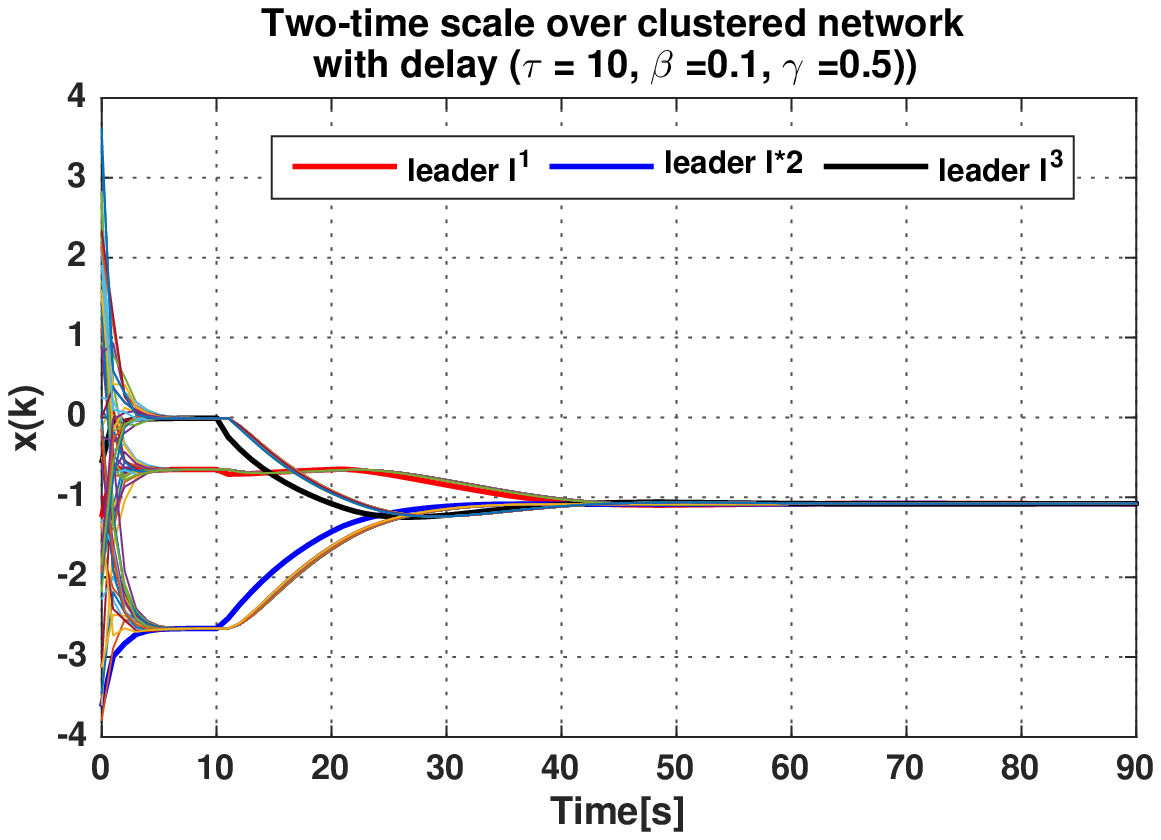}
    \caption{States of agents with $\tau = 10s$ and $\beta =0.1$.}
    \label{nw_small_beta}
\end{figure}
\begin{figure}[htpb!]
    \centering
    \includegraphics[width=1\linewidth]{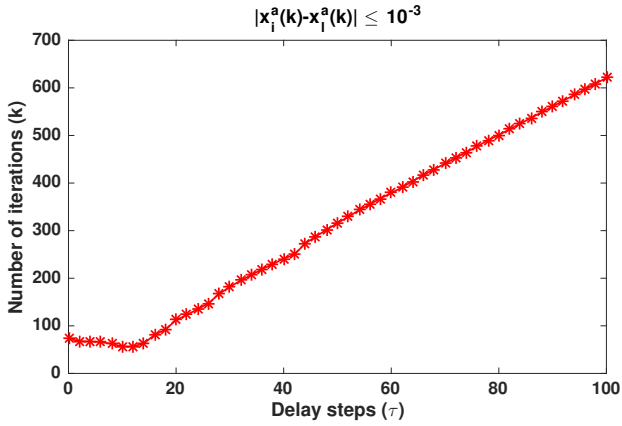}
    \caption{Number of iterations as a function of $\tau$}
    \label{nw_iterations}
\end{figure}

Next, impact of communication delays on convergence speed of the proposed algorithm is tested, where different time delays are used, and the stopping criterion is: $\|x_{i}^{a} - x_{\ell}^{a}\| \leq 10^{-3}$. As seen in  Fig. \ref{nw_iterations}, the number of iterations seems to depend linearly on $\tau$, which validates to our results shown in \textit{Theorem 1}.

\subsection{Larger network analysis}
\label{sim-2}
In this section, we validate the performance of Algorithm \ref{alg:dist_two_time_scale} for a larger network with much denser intra-cluster structure than that in Section \ref{sim-1}. More specifically, we consider a network of 400 nodes partitioned into 5 clusters, as depicted in Fig. \ref{nw_large network}. Each cluster structure $\Gcal^{a}$ is generated by randomly initializing positions of nodes and connecting two nodes if the distance between them is less than a certain value ($0.3$ in this simulation). The weighted adjacency matrices $\Wbf^{a}, \Vbf(k)$ are chosen similarly to that in Section \ref{sim-1}. The initial conditions of the nodes are chosen randomly  within $[-4,\; 4]$. Moreover, we set $\beta =0.05$ and $\gamma = 0.5$, which satisfies the condition in Lemma \ref{lem:leader_consensus}.

\begin{figure}[htpb!]
    \centering
    \includegraphics[width=1\linewidth]{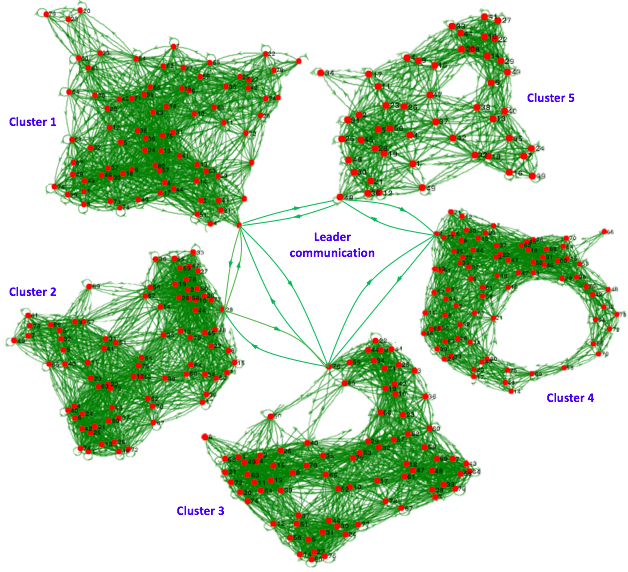}
    \caption{A 5-cluster network of 400 nodes.}
    \label{nw_large network}
\end{figure}

In this simulation, our goal is to investigate the impact of intra-cluster time delays to the proposed algorithm performance. To do so, we set the inter-cluster time delay to be $\tau = 20s$, and vary the intra-cluster time delay  $\tau_{a}$ to be 0, 2, and $15s$. 

Simulation results are then displayed in Fig. \ref{nw_consensuslarge}. As anticipated, when the intra-cluster delay is small compared to the inter-cluster delay, i.e., as $\tau_{a} = 0s$ or $\tau_{a} = 2s$, the convergence inside each cluster is reached much faster than that across clusters. Moreover, network convergence, both intra-cluster and inter-cluster, is faster as the intra-cluster delay is smaller.  On the other hand, when $\tau_{a}$ is large and comparable with $\tau$ (here $\tau_{a} = 15s$ and $\tau = 20s$), it greatly affects to the algorithm performance, where the convergence inside clusters cannot be distinguished from that between clusters. 

Note that most of the existing literature consider a uniform delay for all nodes, hence cannot capture the true behavior of clustered networks, where the delays within each cluster are much smaller as compared to the ones across the clusters. However, the simulations for our proposed distributed two-time-scale method here clearly show such behavior, and help provide a better understanding about the dynamical evolution in clustered networks.

\begin{figure}[htpb!]
    \centering
    \includegraphics[width=1\linewidth]{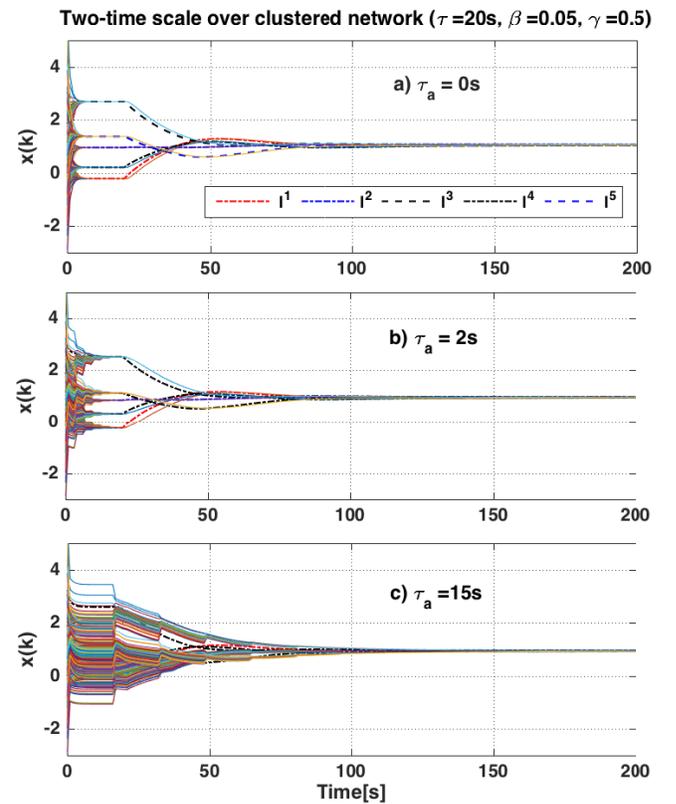}
    \caption{States of agents with $\beta =0.05, \gamma =0.5$, inter-cluster delay delay $\tau = 20s$, and varied intra-cluster delay $\tau_a$.}
    \label{nw_consensuslarge}
\end{figure}


\section{Conclusions and Future works}
In this paper, we consider a distributed two-time-scale consensus algorithm for clustered networks with inter-cluster time delays, where a faster consensus protocol is employed for each cluster while a slower consensus update is accounted for inter-cluster time delays. We proved the convergence of the two-time scale consensus algorithm in the presence of uniform, but possibly arbitrarily large, communication delays between the leaders.  In addition, we provided an explicit formula for the convergence rate of such algorithms, which characterizes the impact of delays and the network topology. Our theoretical results are validated by a number of numerical simulations. A few more interesting questions left from this work including handling quantized communication and directed graphs, which we leave for our future studies. 

\bibliographystyle{IEEEtran}
 \bibliography{refs}                                     
\end{document}